\newtheorem{thm}{Theorem}
\newtheorem{defn}{Definition}
\newtheorem{assum}{Assumption}
\newtheorem{lem}{Lemma}
\newcommand{\mcl}{\mathcal}
\newcommand{\lp}{\left(}
\newcommand{\rp}{\right)}
\newcommand{\mbb}{\mathbb}
\newcommand{\ep}{\tau}
\newcommand{\iter}{n}
\newcommand{\Iter}{N}
\begin{document}

\title{
Active Learning of Dynamics Using Prior Domain Knowledge in the Sampling Process
}

\author{
Kevin~S.~Miller,~\IEEEmembership{Member,~IEEE,}
Adam~J.~Thorpe,~\IEEEmembership{Member,~IEEE,}
Ufuk~Topcu,~\IEEEmembership{Senior Member,~IEEE}%
\thanks{%
    This material is based upon work supported by the National Science Foundation under NSF Grant Number 1652113 and 1836900.  Any opinions, findings, and conclusions or recommendations expressed in this material are those of the authors and do not necessarily reflect the views of the National Science Foundation.
    K. S. Miller was supported by a Peter J. O'Donnell Jr. Postdoctoral Fellowship and NSF IFML Grant Number 2019844.
}
\thanks{K. Miller, A. Thorpe, and U. Topcu are with the Oden Institute for Computational Engineering and Sciences at the University of Texas at Austin, Austin, TX, USA.
}
\thanks{Email: {\tt{\{ksmiller,utopcu\}@utexas.edu}, \newline \tt{adam.thorpe@austin.utexas.edu}}.}
\thanks{K. S. Miller and A. J. Thorpe contributed equally to this work.}
}

\maketitle

\begin{abstract}
    We present an active learning algorithm for learning dynamics that leverages side information by explicitly incorporating prior domain knowledge into the sampling process. Our proposed algorithm guides the exploration toward regions that demonstrate high empirical discrepancy between the observed data and an imperfect prior model of the dynamics derived from side information. Through numerical experiments, we demonstrate that this strategy explores regions of high discrepancy and accelerates learning while simultaneously reducing model uncertainty. We rigorously prove that our active learning algorithm yields a consistent estimate of the underlying dynamics by providing an explicit rate of convergence for the maximum predictive variance. We demonstrate the efficacy of our approach on an under-actuated pendulum system and on the half-cheetah MuJoCo environment. 
\end{abstract}

\section{Introduction}

Model-based and data-driven methods typically represent two alternative approaches to stochastic optimal control. While purely data-driven control typically neglects prior domain knowledge (side information) to reduce bias, incorporating such knowledge into data-driven control can yield more accurate learned models of dynamical systems. 
For example, prior domain knowledge in the form of an imperfect physics-based model can be combined with data-driven modeling to more accurately approximate the true system dynamics. 
Despite the focus on side information in the learned model, there are foreseeable benefits to also leveraging side information to select informative data, and the question of how to select data according to available side information remains an open challenge.

We present an active dynamics learning method that utilizes side information to select sample data in regions where the observed discrepancy between prior domain knowledge and the observed data is highest.
Our approach is based on the upper confidence bound (UCB) algorithm \cite{auer_using_2002, auer_finite-time_2002, Dani2008StochasticLO}, known for its ability to balance exploration and exploitation in multi-armed bandit and Bayesian optimization settings. 
We specifically consider the Gaussian process (GP) setting \cite{rw}, known as GP-UCB \cite{6138914}, which offers a principled approach to characterizing uncertainty---a key component utilized by the UCB algorithm.  
Our key innovation lies in incorporating side information \textit{during the sampling phase}.
We actively sample control inputs along trajectories that favor exploration in regions that demonstrate a higher discrepancy between our observed data and an imperfect prior model of the dynamics.
This emphasizes sampling in regions of the state-action space where prior knowledge does not align with the data-driven estimate while avoiding redundant sampling in regions where our data-driven model aligns with the observed dynamics. 

Our approach is enabled by two key elements: 1) actively learning dynamics in an episodic setting, and 2) incorporating side information.

Active sampling has been explored in the context of learning dynamical systems \cite{wagenmaker2020activelearning, simchowitz2018learningwithoutmixing, simchowitz2019learninglinear, burdick2023active}. 
In the case of linear dynamics, active learning approaches can provide optimal or near-optimal sample complexity results \cite{wagenmaker2020activelearning, simchowitz2018learningwithoutmixing, simchowitz2019learninglinear}.
However, these results often impose unrealistic assumptions--such as the ability to sample any state-action pair without regards to dynamic constraints.
In the general case, sample complexity guarantees are more elusive. 
Optimistic planning approaches, such as OpAx \cite{sukhija2023optimistic} and H-UCRL \cite{NEURIPS2020_a36b598a}
actively sample data in an episodic setting by selecting policies in an open-loop fashion to maximize information gain via an optimistic planner. 
We likewise consider an episodic setting, 
but we consider using prior domain knowledge to determine our exploration policy. 
Notably, GP-UCB has been used in a different context for robot kinematic calibration \cite{burdick2023active}, in which the authors use active sampling to learn the correction to a prior, parametric model of the robot's kinematics. 
While they allow for arbitrary sampling of states and control inputs throughout a single sampling process, we restrict to the selection of control input sequences along trajectories in an episodic fashion. 
That is, we must account for planning action sequences under uncertain dynamics in an MPC-like framework.

Active sampling methods typically do not incorporate prior domain knowledge into the model or the sampling process.
In particular, except for \cite{burdick2023active}, the previously mentioned active learning methods (i.e., \cite{wagenmaker2020activelearning, simchowitz2018learningwithoutmixing, simchowitz2019learninglinear, sukhija2023optimistic}) do not utilize side information in the data-driven model of the system dynamics. 
The use of side information has been investigated for two-armed bandit problems in \cite{5714284, 1406128}, where the sampler has access to information of the reward. 
However, these results do not translate easily to the setting of dynamics learning. To our knowledge, 
our proposed active dynamics learning method is the first to leverage side information in the sampling process.

Our main contribution is an active dynamics learning algorithm based on GP-UCB that incorporates prior domain knowledge into both the sampling process and the learned model. 
Specifically, our approach 
incorporates a discrepancy term in the UCB sampling function that empirically models the difference between the data-driven portion of our model and the prior domain knowledge.
Unlike existing approaches, we 
exploit prior domain knowledge via this discrepancy term to focus sampling in regions where the prior model is most misaligned with the true dynamics.
In addition, we provide a proof that our approach yields a consistent estimator of the dynamics
as more episodes are considered. 
It is often difficult to guarantee consistency for active sampling algorithms due to the non-i.i.d.\ nature of the sampling. 
We prove that as long as the trajectory generated during each episode explores state-action pairs that possess greater than average predictive variance under the GP model, our estimate converges to the true dynamics as the number of episodes increases. 
We demonstrate our approach on a simple pendulum system, and compare our approach against OpAx \cite{sukhija2023optimistic} and 
greedy variance-based sampling, which represents a pure exploration strategy.
We then demonstrate that we can use our learned dynamics model for control on a high-dimensional half-cheetah MuJoCo environment.

The rest of the paper is outlined as follows. In Section \ref{sec: prelim and problem statement}, we formally state the problem setting and provide preliminary background information to inform the discussion and introduction of our sampling method. 
We present our active sampling method in Section \ref{sec: active sampling method} and provide the consistency argument in Subsection \ref{subsec : consistency guarantee}. 
In Section \ref{sec: numerical results}, we demonstrate our active sampling approach.

\section{Preliminaries \& Problem Statement} \label{sec: prelim and problem statement}

\subsection{Problem Statement} \label{subsec: problem statement}

Consider the following discrete-time dynamical system,
\begin{equation}
    \label{eqn: system dynamics}
    x_{t+1} = f(x_{t}, u_{t}) + w_{t},
\end{equation}
where $x_{t} \in \mathcal{X}$ is the state of the system at time $t$, $u_{t} \in \mathcal{U}$ is the control action applied to the system at time $t$, and $w_{t} \sim \mathcal{N}(0, \sigma^{2} I)$ is an independent Gaussian noise term. 

Given a cost function $c : \mathcal{X} \times \mathcal{U} \to \mathbb{R}$, we formulate the following stochastic optimal control problem, where the goal is to select a sequence of control inputs $u_{0}, \ldots, u_{N}$ to minimize the expected cumulative cost, 
\begin{subequations}
\label{eqn: stochastic optimal control problem}
\begin{align}
    \min_{u_{0}, \ldots, u_{N}} \quad & \mathbb{E} \biggl[ \sum_{t=1}^{N} c(x_{t}, u_{t}) \biggr] \\
    \label{eqn: stochastic optimal control problem dynamics}
    \text{s.t.} \quad & x_{t+1} = f(x_{t}, u_{t}) + w_{t}
\end{align}
\end{subequations}
We presume that the system dynamics $f$ in \eqref{eqn: system dynamics} are unknown, meaning the control problem \eqref{eqn: stochastic optimal control problem} is intractable. However, we presume access to an \emph{imperfect} model of the dynamics $p_0 : \mathcal{X} \times \mathcal{U} \to \mathcal{X}$. 
Such side information may be available, for instance, if we have a coarse approximation of system parameters, a first-order estimate of the dynamics, or access to a low-fidelity model through a virtual simulation.

We consider the problem of sequentially computing a data-driven estimate $\mu_{n}$ of the system dynamics \eqref{eqn: system dynamics} by actively selecting the dataset $\mathcal{D}_{n}$ consisting of $n \in \mathbb{N}$ observed state transitions, 
\begin{equation}
    \label{eqn: dataset}
    \mathcal{D}_{n} = \lbrace (x^{i}, u^{i}, y^{i}) \rbrace_{i=1}^{n}.
\end{equation}
where $x$ and $u$ are in $\mathcal{X}$ and $\mathcal{U}$, respectively, and $y = f(x, u) + w$ with $w \sim \mcl N(0, \sigma^2 I)$. 

We study an episodic setting, where in each episode $\tau = 1, 2, \ldots, T$, we compute an exploratory policy $\pi_\tau : \mcl X \rightarrow \mcl U$ that we employ to collect new data.
As we obtain new observations, we update the dataset, $\mathcal{D}_{n+1} = \mathcal{D}_{n} \cup \lbrace (x', u', y') \rbrace$ and subsequently update
the data-driven estimate of the dynamics.

To that end, our goal is to develop an active sampling algorithm that defines an exploratory policy $\pi_\ep$ in each episode $\ep=1, 2, \ldots$ to leverage (i) prior domain knowledge and (ii) information from prior episodes to efficiently learn a predictive model 
of the dynamics $f(x,u)$ in \eqref{eqn: system dynamics}. 
Furthermore, we seek a \emph{consistent} estimator, meaning
$\mu_\ep(x, u) \rightarrow f(x,u)$ as $\ep \rightarrow \infty$.

Our key insight is that a non-zero mean GP can be used to model the residual between our observed data and a prior model derived from side information. Then, we modify the UCB algorithm to maximize the discrepancy between our data-driven model and the prior, and 
periodically accumulate observed data into our prior model at the end of each episode.
This leads the active sampling process to sample more in areas where the prior model is empirically incorrect, and less in areas where our prior is closely aligned with the data.

\subsection{Gaussian Processes} \label{sec: gaussian process}

We use a Gaussian process \cite{rw} to estimate the dynamics.
For notational simplicity, we denote $\mathcal{Z} = \mathcal{X} \times \mathcal{U}$ as the state-action space and $z_{t} = (x_{t}, u_{t})$. 
A Gaussian process $f(z) \sim \mathcal{GP}(m(z), k(z, z'))$ is completely specified by a mean function $m(z)$ and a positive definite covariance function $k(z, z')$ \cite{rw},
\begin{align}
    \label{eqn: gaussian process mean}
    m(z) &= \mathbb{E}[f(z)] \\
    \label{eqn: gaussian process covariance}
    k(z, z') &= \mathbb{E}[(f(z) - m(z))(f(z') - m(z'))]
\end{align}
In the following, we presume that the covariance function $k$ is given by the squared exponential, or RBF function, $k(z, z') = \exp(- \gamma \lVert z - z' \rVert^{2})$, where $\gamma > 0$, and that the mean function $m(z)$ is non-zero \cite[\S~2.7]{rw}. 
The non-zero mean is key to our approach since we use this term to capture prior knowledge of the system dynamics. 
Critically, we will periodically update the prior to reflect the new data gathered during each episode.

Given a dataset $\mathcal{D}_{n}$ as in \eqref{eqn: dataset} consisting of $n$ data points, the predictive mean $\mu_{n}$ and variance $\sigma_{n}^{2}$ can be evaluated at a given test point $z^{*} \in \mathcal{Z}$ as, 
\begin{align}
    \label{eqn: predictive mean}
    \mu_{n}(z^{*}) &= (\boldsymbol{y} - \boldsymbol{m})^{\top} (G_{n} + \sigma^{2} I)^{-1} \boldsymbol{k}_{z^{*}} + m(z^{*}) \\
    \label{eqn: predictive variance}
    \sigma_{n}^{2}(z^{*}) &= k(z^{*}, z^{*}) - \boldsymbol{k}_{z^{*}}^{\top} (G_{n} + \sigma^{2} I)^{-1} \boldsymbol{k}_{z^{*}}
\end{align}
where $\boldsymbol{y}$ and $\boldsymbol{m}$ are vectors, with the $i^{\rm th}$ elements given by $\boldsymbol{y}_{i} = y^{i}$ and $\boldsymbol{m}_{i} = m(x^{i}, u^{i})$, $G_{n} = (g_{ij})$ is an $n \times n$ matrix with elements $g_{ij} = k(z^{i}, z^{j})$, $\boldsymbol{k}_{z^{*}}$ is a vector where the $i^{\rm th}$ element is given by $k(z^{i}, z^{*})$.
As we collect new observations and augment the dataset $\mathcal{D}_{n}$, the predictive mean $\mu_{n}$ and variance $\sigma_{n}^{2}$ are recomputed using the new dataset. 
In practice, these equations can be solved efficiently using Cholesky factorization, see \cite[Algorithm~2.1]{rw}, and the Cholesky factors can be updated via rank-1 updates as new data becomes available. 

\subsection{Active Sampling in the GP Setting} \label{sec: active-background}

We are concerned with fitting a GP model to a minimal dataset, i.e. sampling a limited number of data points that provide significant information about the true dynamics of the system, $f$. 
We utilize an adaptation of the GP-UCB algorithm \cite{6138914} to guide the selection of actions along trajectories to regions of greatest mismatch (discrepancy) between the prior model and the true, underlying dynamics. The GP-UCB algorithm\cite{6138914} for maximizing a function $g(z)$  for $z \in \mcl Z$ uses the predictive mean $\mu_n$ and variance $\sigma_n$ of a GP to decide the next sample point, and chooses points based on the following \emph{acquisition function}, 
\begin{equation}
    \label{eqn: orig acquisition function}
    A(z) = \mu_{n}(z) + \beta_{n}^{1/2} \sigma_{n}(z)
\end{equation}
where $\mu_{n}$ is as in \eqref{eqn: predictive mean}, $\sigma_{n}$ is the square root of the predictive variance in \eqref{eqn: predictive variance}, and $\beta \in \mathbb{R}_{+}$ is a positive real constant called the decay schedule.

The next sample point $z_{n+1}$ is then chosen to maximize the acquisition function as
\begin{equation} \label{eqn: ucb selection}
    z_{n+1} = \arg \max_{z \in \mathcal{Z}} A_{n}(z).
\end{equation}
Intuitively, the acquisition function $A$ in \eqref{eqn: orig acquisition function} provides a tradeoff between sampling in areas where the predictive mean is large, and areas of high variance. This means that in practice, under correct choice of the decay schedule $\beta_{n}$, the UCB algorithm will trade off between ``exploration'' in areas of high uncertainty and ``exploitation'' by selecting points near the max of the predictive mean.

\section{Active Sampling Using Side Information} \label{sec: active sampling method}

Our key insight is to define an exploration policy $\pi_\ep : \mcl X \rightarrow \mcl U$ in episode $\ep$ using \eqref{eqn: ucb selection} to focus sampling on $\Iter$ state-action pairs, $\{(x^{\ep,\iter}, u^{\ep, \iter})\}_{\iter=1}^\Iter$, in regions where the previous episode's learned model, $\mu_{\ep-1,\Iter}$, is maximally different from the true system dynamics, $f$. 
As described in Section \ref{sec: gaussian process}, we iteratively update our GP model as
\begin{align} \label{eqn : episodic gp model}
    \mu_{\ep, \iter}(z) &= (\boldsymbol{y} - \boldsymbol{m}_\ep)^{\top} (G_{\ep,\iter} + \sigma^{2} I)^{-1} \boldsymbol{k}_{z} + m_\ep(z) \\
    \sigma_{\ep, \iter}^{2}(z) &= k(z, z) - \boldsymbol{k}_{z}^{\top} (G_{\ep, \iter} + \sigma^{2} I)^{-1} \boldsymbol{k}_{z}, \label{eqn : episodic gp model var}
\end{align}
where $G_{\ep,\iter} \in \mbb R^{(\ep-1)\Iter +\iter \times (\ep-1)\Iter +\iter}$ is the kernel Gram matrix of all data points observed up to iteration $\iter$ of episode $\ep$, and $m_\ep : \mcl Z \rightarrow \mcl X$ is an \textit{episode-dependent} prior term that we define to be
\begin{equation} \label{eqn : episode-dependent prior}
    m_\ep(z) \coloneqq \mu_{\ep-1, \Iter}(z),
\end{equation}
which is the previous episode's learned model.
At $\ep = 1$, we define $m_1(z) = p_0(z)$, which is determined by the side information or prior domain knowledge available before sampling. 
Furthermore, we define $\mu_{\ep, 0} \equiv \mu_{\ep-1,\Iter}$ and $\sigma_{\ep,0} \equiv \sigma_{\ep-1,\Iter}$ to accumulate observed data from the previous episode.

The main idea of the sampling procedure is that the exploration policy $\pi_\ep$ is computed to maximize an adapted GP-UCB acquisition function over action sequences along the trajectory predicted by the current model of the dynamics. We consider an MPC-like procedure to iteratively update and replan the action sequences at each iteration during the episode. For simplicity of notation, we consider the time horizon to be the same as the episode length, $\Iter$. 

We define the following \textit{discrepancy-based} acquisition function,
\begin{align} \label{eqn : discrepancy-based acquisition function}
    A_{\ep, \iter}(u; x^{\ep, \iter}) &= \big|\mu_{\ep, \iter}(x^{\ep, \iter}, u) - m_\ep(x^{\ep, \iter}, u)\big| \nonumber \\
    &\qquad + \beta_{\ep, \iter}^{1/2} \sigma_{\ep, \iter}(x^{\ep, \iter}, u) + s_\ep(x^{\ep, \iter}, u).
\end{align}
The first term of \eqref{eqn : discrepancy-based acquisition function} models the \textit{discrepancy} between our current model $\mu_{\ep, \iter}$ and the prior $m_{\ep}$ fixed at the beginning of the current episode.
Maximizing this term encourages sampling actions where the current episode's prior appears to be most incorrect. As in \eqref{eqn: orig acquisition function}, the middle term of \eqref{eqn : discrepancy-based acquisition function} allows for the tradeoff between exploration of actions that possess high uncertainty (variance) and exploitation of actions that maximize the discrepancy. 
The last term $s_{\ep}$ of \eqref{eqn : discrepancy-based acquisition function} is an additional term to allow for further side information to be accounted for in the exploration policy. For example, $s_\ep(z)$ could be defined to ensure the system avoids exploring states or actions known to be unsafe, e.g. as,
\begin{equation} 
    s_\ep(z) = \begin{cases}
        0 & z \in \mcl S \\
        -\infty & z \not\in \mcl S \\
    \end{cases},
\end{equation}
where $\mcl S \subset \mcl Z$ is an episode-dependent safe set.

At each iteration $\iter$ during an episode $\ep$, we solve the following optimization problem,
\begin{subequations}
\label{eqn : mpc acquisition function setup}
\begin{align} 
    \max_{u_0, \ldots, u_{\Iter-1} \subset \mcl U} \quad &\sum_{t=0}^{\Iter -1}A_{\ep, \iter}(u_t; x_{t}) \\
    \text{s.t.} \quad &  x_{t+1} = \mu_{\ep, \iter}(x_{t}, u_{t}) \\
    & x_{0} = x^{\ep,\iter}
\end{align}
\end{subequations}
Intuitively, we plan a sequence of actions $u_0, \ldots, u_{\Iter-1}$ from the current state $x^{\ep, \iter}$ that, given the current estimate of the dynamics $\mu_{\ep,\iter}$, will maximize the sum of the discrepancy-based acquisition function in \eqref{eqn : discrepancy-based acquisition function} along the predicted trajectory.
After solving \eqref{eqn : mpc acquisition function setup}, we execute the first control action in the planned sequence, accumulate a new data point in our dataset, and update the dynamics estimate $\mu_{\ep,\iter}$.

At the first iteration $n=0$ during each episode $\ep$, we set $x^{\ep, 0}$ to be the state in $\mathcal{X}$ with the maximum predictive variance $\sigma^2_{\ep,\iter}$.
Note that by definition of $m_\ep \equiv \mu_{\ep-1, \Iter} \equiv \mu_{\ep, 0}$, the discrepancy term in the first iteration is 
$\big|\mu_{\ep, 0}(x^{\ep, \iter}, u) - m_\ep(x^{\ep, \iter}, u)\big| = 0$, 
and so at the first step, the problem in \eqref{eqn : mpc acquisition function setup} reduces to selecting action sequences that maximize the sum of the variances along the trajectory (plus whatever other side information is included in the term $s_\ep$).

At subsequent iterations, the discrepancy is no longer uniformly $0$ throughout the state-action space, and the solutions of action sequences will incorporate how the observed dynamics (as represented by the updated models $\mu_{\ep, \iter}$) differ from the previous episode's learned model $\mu_{\ep-1, \Iter}$. 
In this way, our method prioritizes the selection of actions that are likely to explore in regions of state-action space where our current data model $m_\ep$ is misaligned with the true dynamics, $f$. 

At each iteration $\iter$ in episode $\ep$, we solve \eqref{eqn : mpc acquisition function setup} to compute the exploration policy.
In practice, solving \eqref{eqn : mpc acquisition function setup} can pose a challenge due to the presence of the square root of the variance $\sigma_{\ep, \iter}$ in the objective. Thus, we can use a sample-based MPC method to compute a sequence of control actions, for instance via Cross Entropy Maximization (CEM) \cite{pmlr-v155-pinneri21a}.

\subsection{Consistency Guarantee} \label{subsec : consistency guarantee}

Active sampling procedures should possess guarantees that they lead to consistent estimators of the underlying dynamics. It is usually the case that consistency arguments are made under the assumption that the observed data is given i.i.d., which is not the case in most active learning settings. We establish consistency of our GP model $\mu_{\ep, \iter}$ within our active sampling framework by leveraging the useful property that GPs are all-time calibrated statistical models \cite{sukhija2023optimistic, NEURIPS2020_a36b598a}. Our key insight is to leverage the decrease in predictive variance values of our GP model at key points along observed trajectories to control the overall decrease in predictive variance values over the state-action space $\mcl Z$.

We begin by stating important definitions and assumptions that will allow us to conclude consistency of the GP mean $\mu_{\ep, \Iter} \rightarrow f$ from an argument on the convergence of the predictive variance values $\sigma^2_{\ep, \Iter} \rightarrow 0$. We make the following assumption and give a useful definition:
\begin{assum} \label{assum: bounded rkhs norm}
    We assume that the coordinate-wise functions $[f(\cdot)]_\ell = f_\ell : \mcl Z \rightarrow \mbb R$ lie within a RKHS with kernel $k$ and have bounded norm $B$. That is, $f \in \mcl H_{k,B}^d = \{ f: \|f_\ell\|_k \leq B, \ell = 1, \ldots, d\}$.
\end{assum}
\begin{defn}[All-time calibrated statistical model of $f$, {\cite{rothfuss2023hallucinated}}] 
    % Rothfuss et al. (2023)
    \label{def: all-time calibrated}
    Let $z = (x,u)$ and $\mcl Z \coloneqq \mcl X \times \mcl U$. An all-time calibrated statistical model for the function $f$ is a sequence $\{\mu_j, \sigma_j, \beta_j(\delta)\}_{j \geq 0}$ such that for all $z \in \mcl Z,$ $\ell \in \{1, \ldots, d\}$, and $j \in \mbb N$
    \begin{equation}
        \left| [\mu_{j}(z)]_\ell - [f(z)]_\ell \right| \leq \beta_j(\delta) [\sigma_{j}(z)]_\ell 
    \end{equation}
    with probability greater than or equal to $1 - \delta$.
    Here we denote the $\ell^{th}$ element of a vector $\mathbf{v} \in \mbb R^d$ as $[\mathbf{v}]_\ell$. The scalar function, $\beta_n(\delta) \in \mbb R_{+}$ quantifies the width of the $1-\delta$ confidence intervals. We assume wlog that $\beta_j$ monotonically increases with $n$, and that $[\sigma_j(z)]_\ell \leq \sigma_{max}$ for all $z \in \mcl Z, j \geq 0$, and $\ell \in \{1, \ldots, d\}$.
\end{defn}
Then, assuming that the true dynamics $f$ we wish to model is a bounded function (i.e., Assumption \ref{assum: bounded rkhs norm}) within an RKHS of vector-valued functions, then we can conclude that the GP model is a well-calibrated statistical model for $f$:
\begin{lem}[Well-calibrated confidence intervals for RKHS, {\cite{rothfuss2023hallucinated}}] \label{lemma : well-calibrated rkhs}
    % Rothfuss et al. (2023)
    Let $f \in \mcl H^d_{k,B}$ and suppose that $\mu_j$ and $\sigma_j$ are the posterior mean and variance of a GP with kernel $k$. There exists $\beta_j(\delta)$, for which the tuple $(\mu_j, \sigma_j, \beta_j(\delta))$ 
    is an all-time-calibrated statistical model (Definition \ref{def: all-time calibrated}) of $f$ .
\end{lem}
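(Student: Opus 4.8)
\emph{Proof proposal.} The plan is to reduce the vector-valued claim to a scalar RKHS confidence bound applied coordinate-wise, and then to establish that scalar bound through the standard self-normalized concentration machinery underlying GP-UCB \cite{6138914, rothfuss2023hallucinated}. Because the kernel $k$ is shared across output coordinates and the noise $w \sim \mcl N(0, \sigma^2 I)$ is isotropic, the GP posterior decouples across coordinates, so $[\mu_j]_\ell$ and $[\sigma_j]_\ell$ are exactly the scalar GP posterior for $f_\ell$. It therefore suffices to show that, for each fixed $\ell$ with $\|f_\ell\|_k \le B$, there is a scalar $\beta_j(\delta)$ for which $\left|[\mu_j(z)]_\ell - f_\ell(z)\right| \le \beta_j(\delta)\,[\sigma_j(z)]_\ell$ holds simultaneously for all $z \in \mcl Z$ and all $j$ with probability at least $1 - \delta/d$; a union bound over the $d$ coordinates then yields the $1-\delta$ guarantee of Definition \ref{def: all-time calibrated}.

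For a fixed coordinate, I would first split the pointwise error from \eqref{eqn: predictive mean} into a bias part and a noise part. Writing $\boldsymbol{y} = \boldsymbol{f} + \boldsymbol{w}$ with $\boldsymbol{f}_i = f_\ell(z^i)$ and $\boldsymbol{w}_i$ the noise, and absorbing the prior mean into $f_\ell$,
\[
[\mu_j(z)]_\ell - f_\ell(z) = \underbrace{\boldsymbol{f}^\top (G_j + \sigma^2 I)^{-1}\boldsymbol{k}_z - f_\ell(z)}_{\text{bias}} + \underbrace{\boldsymbol{w}^\top (G_j + \sigma^2 I)^{-1}\boldsymbol{k}_z}_{\text{noise}}.
\]
The bias term I would control using the reproducing property $f_\ell(z) = \langle f_\ell, k(z,\cdot)\rangle_k$ together with Cauchy--Schwarz in $\mcl H_k$; after the standard matrix identity relating the posterior variance to the feature-space information matrix, this bias is bounded by $\|f_\ell\|_k\,[\sigma_j(z)]_\ell \le B\,[\sigma_j(z)]_\ell$ up to a constant factor.

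The noise term is the crux, and bounding it uniformly over all $j$ is the step I expect to be the main obstacle. Factoring out $[\sigma_j(z)]_\ell$, the noise contribution is the inner product of a self-normalized martingale $S_j = \sum_i \boldsymbol{w}_i\,\phi(z^i)$ against a unit direction, measured in the norm induced by $V_j = \sigma^2 I + \sum_i \phi(z^i)\phi(z^i)^\top$ in the RKHS feature space. The key difficulty is that the sampling is adaptive and non-i.i.d., so a naive union bound over the time index $j$ would make $\beta_j$ degrade. Instead I would invoke a self-normalized (method-of-mixtures) concentration inequality, which delivers the \emph{anytime} bound $\|S_j\|_{V_j^{-1}}^2 \le 2\sigma^2 \log\!\big(\det(V_j)^{1/2}\det(\sigma^2 I)^{-1/2} d/\delta\big)$ simultaneously for all $j$ with probability at least $1-\delta/d$. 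The determinant ratio is exactly the information gain $\Gamma_j$, so the noise contribution scales like $\sigma\sqrt{\Gamma_j + \log(d/\delta)}\,[\sigma_j(z)]_\ell$.

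Combining the two bounds, I would set $\beta_j(\delta) = B + \sigma\sqrt{2\left(\Gamma_j + 1 + \log(d/\delta)\right)}$, absorbing constants. This is nonnegative and, since $\Gamma_j$ is nondecreasing in $j$, monotonically nondecreasing, meeting the structural requirements of Definition \ref{def: all-time calibrated}; the bound $[\sigma_j(z)]_\ell \le \sigma_{max}$ follows from $[\sigma_j(z)]_\ell^2 \le k(z,z) = 1$ for the RBF kernel. The essential point to get right is that the method-of-mixtures argument supplies the all-time property directly, which is precisely what keeps $\beta_j(\delta)$ from blowing up under adaptive sampling.
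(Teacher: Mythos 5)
The paper never proves this lemma: it is imported verbatim by citation from \cite{rothfuss2023hallucinated}, whose own proof rests on the kernelized self-normalized concentration bound of Chowdhury--Gopalan/Abbasi-Yadkori. Your proposal reconstructs exactly that standard argument --- coordinate-wise decoupling with a union bound over the $d$ outputs, the bias/noise split of the posterior mean error, Cauchy--Schwarz with the reproducing property for the bias, the method-of-mixtures anytime bound with the information-gain determinant ratio for the noise, and the resulting monotone $\beta_j(\delta) = B + \sigma\sqrt{2(\Gamma_j + 1 + \log(d/\delta))}$ --- so it is correct and matches the approach of the cited source; the only glossed step worth noting is that in the infinite-dimensional RKHS setting the determinant in the self-normalized bound must be interpreted via $\det(I + \sigma^{-2}G_j)$, which is precisely how the cited works handle it.
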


Note that while these definitions of the predictive variance are considered to be vector-valued, our GP model simply considers scalar variance values (i.e., the case that all entries of the vector-valued variance are equivalent). Thus, while our result could be stated in terms of bounding the norm of a vector-valued variance,  $\|\boldsymbol{\sigma}^{\mathbf{2}}_{\ep, \Iter}\|_2^2$, we will just consider the convergence of the scalar value $\sigma^2_{\ep, \Iter}$. 

To summarize, to show consistency of the underlying GP model $\mu_{\ep, \Iter}(z) \rightarrow f(z)$ as $\ep \rightarrow \infty$, one simply needs to show that the variance of the GP model $\sigma^2_{\ep, \Iter}$ vanishes as more episodes occur. We consider this guarantee under the following additional assumptions
\begin{assum} \label{assumption : finite Z}
    The state-action space $\mcl Z$ is finite; that is, $|\mcl Z| = N_z < \infty$.
\end{assum} 
\begin{assum}
    The side information term $s_\ep $ is already captured in the definition of the domain of possible state-action pairs, $\mcl Z$ (i.e., $s_\ep \equiv 0$ for our purposes). 
\end{assum}
\begin{assum} \label{assumption : greater than average}
    In each episode $\ep$, there exists at least one index $\iter_\ast \in \{1, \ldots, \Iter\}$ at which the value $z^{\ep, \iter_\ast}$ in the sampled trajectory $\{z^{\ep, \iter} = (x^{\ep, \iter}, u^{\ep, \iter})\}_{\iter=1}^\Iter$ satisfies
    \begin{equation} \label{eqn : assumption bound}
        \sigma^2_{\ep, \iter_\ast}(z^{\ep, \iter_\ast}) \geq \frac{1}{N_z} \sum_{z \in \mcl Z} \sigma^2_{\ep, \iter_\ast}(z),
    \end{equation}
    where $N_z$ is the cardinality of the state-action space $\mcl Z$.
\end{assum}
The final assumption (Assumption \ref{assumption : greater than average}) is a technical assumption that simply asserts that the trajectory in each episode will contain at least one state-action pair $z = (x,u)$ that has a relatively large variance value. We suggest that this is a rather mild condition which is reasonable to assume in various settings, such as in our case of discrepancy-based GP-UCB planning, since we take into account the variance values when identifying action sequences. A setting in which this assumption is obviously satisfied is when trajectories are allowed to start at arbitrary $z \in \mcl Z$, such as those points with maximal variance. 

Finally, we prove a useful Lemma that writes the iteration update in the predictive variance values in terms of the previous variance values.
\begin{lem} \label{lemma : variance update}
    Consider a GP utilizing kernel $k$ with predictive mean \eqref{eqn: predictive mean} and variance \eqref{eqn: predictive variance}. Upon observation at $z^\ast \in \mcl Z$, then the update to the predictive variance at the point $z \in \mcl Z$ can be written as
    \begin{equation}
        \sigma^2_{n+1}(z) = \sigma^2_{n}(z) -  \frac{\operatorname{cov}^2_{\sigma^2}\lp z^\ast, z\rp}{\sigma^2_n(z^\ast) + \sigma^2}.
    \end{equation}
\end{lem}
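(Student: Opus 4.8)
The plan is to treat the new observation at $z^\ast$ as a rank-one augmentation of the regularized Gram matrix and to read off the updated variance from the Schur-complement (block-matrix inversion) formula. Let $A := G_n + \sigma^2 I$ and append $z^\ast$ as the last data point, so that the augmented system matrix has the block form $G_{n+1}+\sigma^2 I = \left[\begin{smallmatrix} A & \boldsymbol{k}_{z^\ast} \\ \boldsymbol{k}_{z^\ast}^\top & k(z^\ast,z^\ast)+\sigma^2\end{smallmatrix}\right]$, where $\boldsymbol{k}_{z^\ast}$ collects the kernel evaluations between $z^\ast$ and the first $n$ points. The Schur complement of $A$ in this matrix is $s := k(z^\ast,z^\ast) + \sigma^2 - \boldsymbol{k}_{z^\ast}^\top A^{-1}\boldsymbol{k}_{z^\ast}$, which I immediately recognize from \eqref{eqn: predictive variance} as $s = \sigma_n^2(z^\ast) + \sigma^2$; this will become the denominator. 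Since $\sigma^2>0$ and $\sigma_n^2(z^\ast)\ge 0$, we have $s>0$, so the block inverse is well defined.

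Next, I would apply the standard block-inverse identity,
\[
(G_{n+1}+\sigma^2 I)^{-1} = \begin{pmatrix} A^{-1} + s^{-1}A^{-1}\boldsymbol{k}_{z^\ast}\boldsymbol{k}_{z^\ast}^\top A^{-1} & -s^{-1}A^{-1}\boldsymbol{k}_{z^\ast} \\ -s^{-1}\boldsymbol{k}_{z^\ast}^\top A^{-1} & s^{-1} \end{pmatrix},
\]
and substitute it into the predictive-variance formula \eqref{eqn: predictive variance} evaluated at $z$ with the augmented cross-kernel vector $\widetilde{\boldsymbol{k}}_z = (\boldsymbol{k}_z^\top,\, k(z^\ast,z))^\top$. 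Expanding $\widetilde{\boldsymbol{k}}_z^\top (G_{n+1}+\sigma^2 I)^{-1}\widetilde{\boldsymbol{k}}_z$, the quadratic, cross, and scalar contributions collect into $\boldsymbol{k}_z^\top A^{-1}\boldsymbol{k}_z + s^{-1}\big(k(z^\ast,z) - \boldsymbol{k}_{z^\ast}^\top A^{-1}\boldsymbol{k}_z\big)^2$, i.e.\ a perfect square. Subtracting this from $k(z,z)$ and grouping the first two terms as $\sigma_n^2(z)$ via \eqref{eqn: predictive variance} leaves $\sigma_{n+1}^2(z) = \sigma_n^2(z) - s^{-1}\big(k(z^\ast,z)-\boldsymbol{k}_{z^\ast}^\top A^{-1}\boldsymbol{k}_z\big)^2$.

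Finally, I would identify the term inside the square: $k(z^\ast,z) - \boldsymbol{k}_{z^\ast}^\top (G_n+\sigma^2 I)^{-1}\boldsymbol{k}_z$ is exactly the $n$-point GP posterior covariance between $z^\ast$ and $z$, the natural two-input generalization of \eqref{eqn: predictive variance}; denoting it $\operatorname{cov}_{\sigma^2}(z^\ast,z)$ and recalling $s=\sigma_n^2(z^\ast)+\sigma^2$ yields the claimed identity. The argument is almost entirely bookkeeping; the only genuine observations are that the Schur complement equals $\sigma_n^2(z^\ast)+\sigma^2$ and that the accumulated cross-terms assemble into the squared posterior covariance. The one point worth checking is that reordering the data (placing $z^\ast$ last) does not affect the result, which holds because $G_{n+1}$ and the predictive formulas are invariant under simultaneous row/column permutations of the dataset.
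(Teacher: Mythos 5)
Your proposal is correct and follows essentially the same route as the paper's proof: both augment the regularized Gram matrix with $z^\ast$ as a final block row/column, invert via the block-matrix (Schur complement) formula, identify the Schur complement with $\sigma_n^2(z^\ast)+\sigma^2$, and recognize the cross-term $k(z^\ast,z)-\boldsymbol{k}_{z^\ast}^\top (G_n+\sigma^2 I)^{-1}\boldsymbol{k}_z$ as the posterior covariance whose square forms the correction term. The only cosmetic difference is that the paper factors the inverse's correction as a scaled rank-one outer product while you write out the four blocks explicitly; your added remarks on positivity of the Schur complement and permutation invariance are harmless extras.
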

\begin{proof}
    First, we write the definition of the predictive variance upon observing at $z^\ast$:
    \begin{align}
        &\sigma^2_{n+1}(z) = k(z,z) \nonumber \\
        &- \lp \boldsymbol{k}_z \ k(z^\ast, z) \rp 
        \begin{pmatrix}
            G_n + \sigma^2 I & \boldsymbol{k}_{z^\ast} \\
            \boldsymbol{k}_{z^\ast}^T & k(z^\ast, z^\ast) + \sigma^2 \\
        \end{pmatrix}^{-1}
        \begin{pmatrix}
            \boldsymbol{k}_z \\
            k(z^\ast, z) \\
        \end{pmatrix} \\
        &=: k(z,z) - \lp \boldsymbol{k}_z \ k(z^\ast, z) \rp 
        W_{n+1}
        \begin{pmatrix}
            \boldsymbol{k}_z \\
            k(z^\ast, z) \\
        \end{pmatrix}.
    \end{align}
    Defining $W_n = (G_n + \sigma^2 I)^{-1}$ and $a^{-1} = k(z^\ast, z^\ast) + \sigma^2 - \boldsymbol{k}_{z^\ast}^T W_n \boldsymbol{k}_{z^\ast} = \sigma^2_n(z^\ast) + \sigma^2$, we can use the block matrix inversion formula to compute
    \begin{align}
        W_{n+1} &= \begin{pmatrix}
            W_n & 0 \\
            0 & 0 \\
        \end{pmatrix} + a \begin{pmatrix}
            W_n \boldsymbol{k}_{z^\ast} \boldsymbol{k}_{z^\ast}^T W_n & - W_n \boldsymbol{k}_{z^\ast} \\
            -\boldsymbol{k}_{z^\ast}^T W_n & 1 \\
        \end{pmatrix} \\
    &= \begin{pmatrix}
            W_n & 0 \\
            0 & 0 \\
        \end{pmatrix} + a 
        \begin{pmatrix}
            W_n \boldsymbol{k}_{z^\ast} \\ 
            -1 \\ 
        \end{pmatrix} 
        \lp \boldsymbol{k}_{z^\ast}^T W_n \ -1 \rp. 
    \end{align} 
Now, we can compute the inner product
    \begin{align}
        \lp \boldsymbol{k}_{z^\ast}^T W_n \ -1 \rp \begin{pmatrix}
            \boldsymbol{k}_z \\
            k(z^\ast, z) \\
        \end{pmatrix} &= -\lp k(z^\ast, z) - \boldsymbol{k}_{z^\ast}^T W_n\boldsymbol{k}_{z} \rp  \\
        &= - \operatorname{cov}_{\sigma^2}\lp z^\ast, z\rp,
    \end{align}
    which allows us to conclude
    \begin{align}
        \sigma^2_{n+1}(z) &= k(z,z) - \boldsymbol{k}_{z}^T W_n\boldsymbol{k}_{z} - a \lp \operatorname{cov}_{\sigma^2}\lp z^\ast, z\rp\rp^2 \\
        &= \sigma^2_n(z) - \frac{\operatorname{cov}^2_{\sigma^2}\lp z^\ast, z\rp}{\sigma^2_n(z^\ast) + \sigma^2},
    \end{align}
    which concludes the proof.
\end{proof}

With the stated assumptions and Lemma \ref{lemma : variance update}, we now state our main theorem regarding the convergence of variance values resulting from our proposed active sampling method.
\begin{thm} \label{thm : convergence of variance}
    If for all $\ep  \geq 1$ we have that Assumptions \ref{assumption : finite Z}-\ref{assumption : greater than average} are satisfied
    and that the regularization parameter is scaled as $\sigma^2 = \lp (\ep-1)N +  \iter\rp^{-2}$,
    then we have that sampling action sequences based on the finite-horizon planning using discrepancy-based GP-UCB acquisition function \eqref{eqn : mpc acquisition function setup} gives convergence of the posterior variance values, 
    \begin{equation}
        \max_{z \in \mcl Z} \ \sigma_{\ep, 0}^2(z) \rightarrow 0
    \end{equation}
    as $\ep \rightarrow \infty$.

    If Assumption \ref{assum: bounded rkhs norm} holds, then Lemma \ref{lemma : well-calibrated rkhs} implies consistency of the corresponding predictive mean $\mu_{\ep,0} \rightarrow f$ as $\ep \rightarrow \infty$.
\end{thm}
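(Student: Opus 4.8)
The plan is to reduce the statement to the convergence of the \emph{total} posterior variance $S_\ep \coloneqq \sum_{z \in \mcl Z} \sigma^2_{\ep, 0}(z)$ to zero. Since $\mcl Z$ is finite (Assumption~\ref{assumption : finite Z}) and every summand is nonnegative, $\max_{z} \sigma^2_{\ep,0}(z) \leq S_\ep$, so it suffices to prove $S_\ep \to 0$. The starting observation is that Lemma~\ref{lemma : variance update} makes each posterior variance monotonically nonincreasing under new observations, because the subtracted term $\operatorname{cov}^2_{\sigma^2}\lp z^\ast, z\rp / \lp \sigma^2_n(z^\ast)+\sigma^2\rp$ is nonnegative. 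Hence $a_\ep \coloneqq S_\ep / N_z$ is nonincreasing in $\ep$ and bounded below by $0$, so it converges to some limit $a_\infty \geq 0$, and the goal reduces to showing $a_\infty = 0$.

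Next I would quantify the per-episode decrease forced by Assumption~\ref{assumption : greater than average}. Specializing Lemma~\ref{lemma : variance update} to $z = z^\ast$ and using $\operatorname{cov}_{\sigma^2}\lp z^\ast, z^\ast\rp = \sigma^2_n(z^\ast)$, observing at $z^\ast$ drops the variance there by exactly $\lp\sigma^2_n(z^\ast)\rp^2 / \lp\sigma^2_n(z^\ast)+\sigma^2\rp$; as all per-point decreases are nonnegative, the drop in the full sum over an episode is at least this single-point drop. At the pivot iteration $\iter_\ast$ the sampled point satisfies $\sigma^2_{\ep,\iter_\ast}(z^{\ep,\iter_\ast}) \geq \tfrac{1}{N_z}\sum_z \sigma^2_{\ep,\iter_\ast}(z)$, and by monotonicity of the average the pivot variance $v_\ep \coloneqq \sigma^2_{\ep,\iter_\ast}(z^{\ep,\iter_\ast})$ is bounded below by $a_\infty$. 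Since $t \mapsto t^2/(t+\sigma^2)$ is increasing on $t>0$, this yields the one-episode recursion
\begin{equation*}
    S_{\ep+1} \leq S_\ep - \frac{v_\ep^2}{v_\ep + \sigma^2}, \qquad v_\ep \geq a_\infty .
\end{equation*}

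I would then close by contradiction. Suppose $a_\infty > 0$. Because $\sigma^2 = \lp(\ep-1)\Iter + \iter\rp^{-2} \to 0$, for all large $\ep$ we have $\sigma^2 \leq a_\infty$, whence the guaranteed drop satisfies $v_\ep^2/(v_\ep+\sigma^2) \geq a_\infty^2/(a_\infty+\sigma^2) \geq a_\infty/2$, bounded away from zero. Summing the recursion telescopes to $\sum_{\ep} (S_\ep - S_{\ep+1}) = S_1 - \lim_\ep S_\ep \leq S_1 < \infty$, yet the lower bound forces this series to diverge, a contradiction. Hence $a_\infty = 0$, so $S_\ep \to 0$ and $\max_z \sigma^2_{\ep,0}(z) \to 0$; tracking the decay rate of $\sigma^2$ inside the same recursion is what produces the explicit rate on the maximum variance. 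For the consistency consequence I would invoke the calibration inequality of Lemma~\ref{lemma : well-calibrated rkhs}, $\left|[\mu_{\ep,0}(z)]_\ell - [f(z)]_\ell\right| \leq \beta_\ep(\delta)\,[\sigma_{\ep,0}(z)]_\ell$, valid under Assumption~\ref{assum: bounded rkhs norm}, and let the right-hand side vanish over the finite set $\mcl Z$ to conclude $\mu_{\ep,0} \to f$.

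I expect the delicate points to be twofold. First, propagating the ``greater-than-average'' bound from the single pivot point into a decrease of the \emph{global} sum, and in particular justifying $v_\ep \geq a_\infty$ through the within-episode monotonicity while respecting the exact iteration indexing (the pre- versus post-observation variance at the pivot); this is the crux that converts a local event into a uniform conclusion. Second, the consistency step requires that the product $\beta_\ep(\delta)\,\sigma_{\ep,0}(z)$ vanish even though $\beta_\ep$ grows with the sample count, which is precisely where the chosen polynomial decay $\sigma^2 = \lp(\ep-1)\Iter+\iter\rp^{-2}$ must drive $\sigma_{\ep,0}$ to zero fast enough to dominate the growth of $\beta_\ep$.
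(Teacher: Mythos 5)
Your proof is correct, and it reaches the stated limit by a genuinely different closing argument than the paper's. Both arguments share the same skeleton: bound the maximum by the sum over the finite set $\mcl Z$, use the monotone decrease of posterior variance (Lemma~\ref{lemma : variance update}), isolate the exact single-point decrease $v_\ep^2/(v_\ep + \sigma^2)$ at the pivot point supplied by Assumption~\ref{assumption : greater than average}, and exploit the greater-than-average property. The difference is in how each closes. The paper converts the per-episode decrease into a self-contained multiplicative contraction for the normalized sum $v_\ep = \tfrac{1}{N_z}\sum_{z \in \mcl Z} \sigma^2_{\ep,0}(z)$, namely $v_{\ep+1} - v_\ep \leq -v_\ep^2 / \lp N_z(N_z \Iter^{-2} + v_\ep)\rp$, and then dominates this recursion by the ODE $v'(t) = -v^2/\lp N_z(N_z\Iter^{-2}+v)\rp$, solved by separation of variables to obtain an explicit episode count $t(\epsilon)$ by which the maximum variance falls below $\epsilon$ --- this is the explicit convergence rate advertised in the abstract. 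You instead keep the decrease additive, $S_{\ep+1} \leq S_\ep - v_\ep^2/(v_\ep + \sigma^2)$, lower-bound the pivot variance by the limit $a_\infty$ of the monotone normalized sums, and derive a contradiction by telescoping: if $a_\infty > 0$, infinitely many episodes each force a decrease bounded away from zero, which is incompatible with $S_1 < \infty$. Your route is more elementary (no ODE comparison) and in fact needs less from the regularization schedule --- mere boundedness of $\sigma^2$ already makes your contradiction go through, whereas the paper's rate uses $\sigma^2 \leq \Iter^{-2}$ --- but it yields only the qualitative limit, not a rate; you correctly note that a rate could be recovered by tracking constants in the same recursion, which is exactly what the paper's ODE step does. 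On the consistency claim, both you and the paper simply invoke Lemma~\ref{lemma : well-calibrated rkhs}, but you additionally flag (without resolving) that $\beta_\ep(\delta)$ grows with the sample count, so one must verify $\beta_\ep(\delta)\,\sigma_{\ep,0} \to 0$; the paper leaves this point entirely implicit, so this is a caveat on the theorem itself rather than a gap in your argument.
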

\begin{proof} 
    Let $\ep \geq 2$ denote a given episode and let $\iter_\ast \in \{1, \ldots, \Iter\}$ be the iteration index given by Assumption \ref{assumption : greater than average}. Furthermore, let $z_\ast \coloneqq z^{\ep, \iter_\ast}$ be the corresponding state-action pair selected at the identified iteration. Due to the monotonic decreasing nature of the variance over iterations and episodes, we can write 
    \begin{align} \label{eqn : max var z}
        \max_{z \in \mcl Z} \ \sigma^2_{\ep+1, 0}(z) &\leq  \sum_{z \in \mcl Z} \ \sigma^2_{\ep, \Iter}(z) \leq \sum_{z \in \mcl Z} \ \sigma^2_{\ep, \iter_\ast + 1}(z).
    \end{align}
    
    According to Lemma \ref{lemma : variance update}, the decrease in the variance for $z_\ast$ at the identified iteration $\iter_\ast$ can be written as
    \begin{align} \label{eqn : variance decrease}
        \sigma^2_{\ep, \iter_\ast + 1}(z_\ast) &= \sigma^2_{\ep, \iter_\ast}(z_\ast) - \frac{\sigma^4_{\ep, \iter_\ast}(z_\ast)}{\sigma^2 + \sigma^2_{\ep, \iter_\ast}(z_\ast)} \\
        &= \lp \frac{\sigma^2 }{\sigma^2 + \sigma^2_{\ep, \iter_\ast}(z_\ast)}\rp  \sigma^2_{\ep, \iter_\ast}(z_\ast) \\
        &\leq \lp \frac{((\ep-1)\Iter)^{-2}}{((\ep-1)\Iter)^{-2} + \sigma^2_{\ep, \iter_\ast}(z_\ast)}\rp  \sigma^2_{\ep, \iter_\ast}(z_\ast) \\
        &\leq \lp \frac{\Iter^{-2}}{\Iter^{-2} + \sigma^2_{\ep, \iter_\ast}(z_\ast)}\rp  \sigma^2_{\ep, \iter_\ast}(z_\ast),
    \end{align} 
    where we have used the property that $\sigma^2 = \lp (\ep-1)N +  \iter\rp^{-2} \leq ((\ep-1)\Iter)^{-2}$ and that $\ep \geq 2$.
    Plugging this into \eqref{eqn : max var z} and using \eqref{eqn : assumption bound} from Assumption \ref{assumption : greater than average}, we bound
   \begin{align}
        & \max_{z \in \mcl Z} \ \sigma^2_{\ep+1, 0}(z) \nonumber \\
        &\quad  \leq  \sum_{z  \not= z_\ast} \sigma^2_{\ep, \iter_\ast}(z) + \lp \frac{\Iter^{-2}}{\Iter^{-2} + \sigma^2_{\ep, \iter_\ast}(z_\ast)}\rp  \sigma^2_{\ep, \iter_\ast}(z_\ast)  \\
        &\quad \leq  \sum_{z  \in \mcl Z} \sigma^2_{\ep, \iter_\ast}(z) - \lp \frac{\sigma^2_{\ep, \iter_\ast}(z_\ast)}{\Iter^{-2} + \sigma^2_{\ep, \iter_\ast}(z_\ast)}\rp  \sigma^2_{\ep, \iter_\ast}(z_\ast)  \\
        &\quad \leq \lp 1 - \frac{1}{N_z}\lp \frac{\sigma^2_{\ep, \iter_\ast}(z_\ast)}{\Iter^{-2} + \sigma^2_{\ep, \iter_\ast}(z_\ast)} \rp \rp \sum_{z  \in \mcl Z} \sigma^2_{\ep, \iter_\ast}(z)  \\
        &\quad \leq \lp 1 - \frac{\sum_{z  \in \mcl Z} \sigma^2_{\ep, \iter_\ast}(z)}{N_z \lp N_z\Iter^{-2} + \sum_{z  \in \mcl Z} \sigma^2_{\ep, \iter_\ast}(z)\rp} \rp \sum_{z  \in \mcl Z} \sigma^2_{\ep, \iter_\ast}(z) \\
        &\quad \leq \lp 1 - \frac{\sum_{z  \in \mcl Z} \sigma^2_{\ep, 0}(z)}{N_z \lp N_z\Iter^{-2} + \sum_{z  \in \mcl Z} \sigma^2_{\ep, 0}(z)\rp} \rp \sum_{z  \in \mcl Z} \sigma^2_{\ep, 0}(z),
    \end{align}
    where in the last line we have used the monotonic decreasing property of the predictive variance values over the iterations.
    
    Defining $v_\ep = \frac{1}{N_z}\sum_{z \in \mcl Z} \sigma^2_{\ep, 0}(z)$, we see the recurrence relation 
    \begin{equation}
        v_{\ep + 1} - v_{\ep} \leq \frac{-v^2_{\ep}}{N_z(N_z\Iter^{-2} + v_{\ep})}
    \end{equation}
    with initial condition $v_1 = N_z$ (since $\sigma^2_{1, 0}(z) = 1$ for all $z \in \mcl Z$). 
    Noting that the function $f(v) = -v^2/(a(b + v))$ is decreasing on $v \in (0, \infty)$, we can pass to a corresponding differential equation to serve as an upper bound for $v_{\ep}$:
    \begin{align}
        v'(t) = \frac{-v^2}{N_z(N_z\Iter^{-2} + v)} \qquad v(0) = 1,
    \end{align}
    for which we can solve for the ``time'' $t(\epsilon) > 0$ for which $v(t(\epsilon)) = \epsilon$ by separation of variables:
    \begin{align}
        t(\epsilon) &= -N_z\lp N_z \Iter^{-2} \int_{N_z}^\epsilon \frac{1}{v^2}dv + \int_{N_z}^\epsilon \frac{1}{v}\, dv  \rp \\
        &= \frac{N_z(N_z-\epsilon)}{\Iter^{2}\epsilon} + N_z \log\lp\frac{N_z}{\epsilon}\rp. 
    \end{align}
    That is to say, by episode $\tau(\epsilon) \geq t(\epsilon)$, we are ensured that 
    \begin{equation}
        \max_{z \in \mcl Z} \sigma^2_{\ep(\epsilon),0}(z) \leq \epsilon. 
    \end{equation}
    With this explicit convergence rate, we can see that as $\ep \rightarrow \infty$ we then obtain that $\sigma_{\ep, 0}^2(z) \rightarrow 0$ for all $z \in \mcl Z$. 
\end{proof}

The proof of Theorem \ref{thm : convergence of variance} relies on the convenient fact that our discrepancy-based GP-UCB acquisition function reduces to maximum variance sampling in the first iteration of each episode. As such, the monotonicity of predictive variance in combination with the assumption that we can start trajectories at arbitrary $z \in \mcl Z$ gives that we can bound the variance at each iteration in an episode by the variance of the value $z_\ast$ given by Assumption \ref{assumption : greater than average}.

\section{Numerical Results} \label{sec: numerical results}

For all experiments, we use a Gaussian (RBF) kernel $k(x, x') = \exp(-\gamma \lVert x - x' \rVert^{2})$, $\gamma = 0.5$, and the regularization parameter is chosen to be $\sigma^2 = 1 / n^{2}$, where $n$ is the size of the dataset.

\subsection{Estimating a Simple Pendulum System}

\begin{figure}
    \centering
    \includegraphics[keepaspectratio]{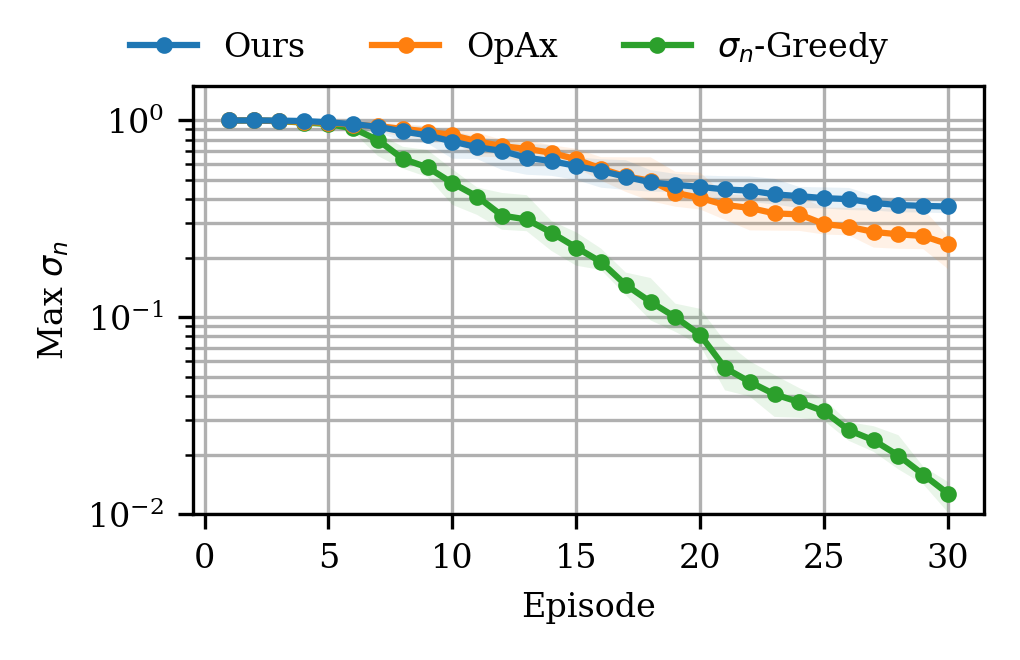}
    \caption{Average reduction of maximum variance (uncertainty) over $8000$ test points $\mathcal{T}$ spaced evenly over the entire state space. The shaded region shows the maximum and minimum values over $10$ independent trials.}
    \label{fig: max sigma comparison}
\end{figure}  

\begin{figure}
    \centering
    \includegraphics[keepaspectratio]{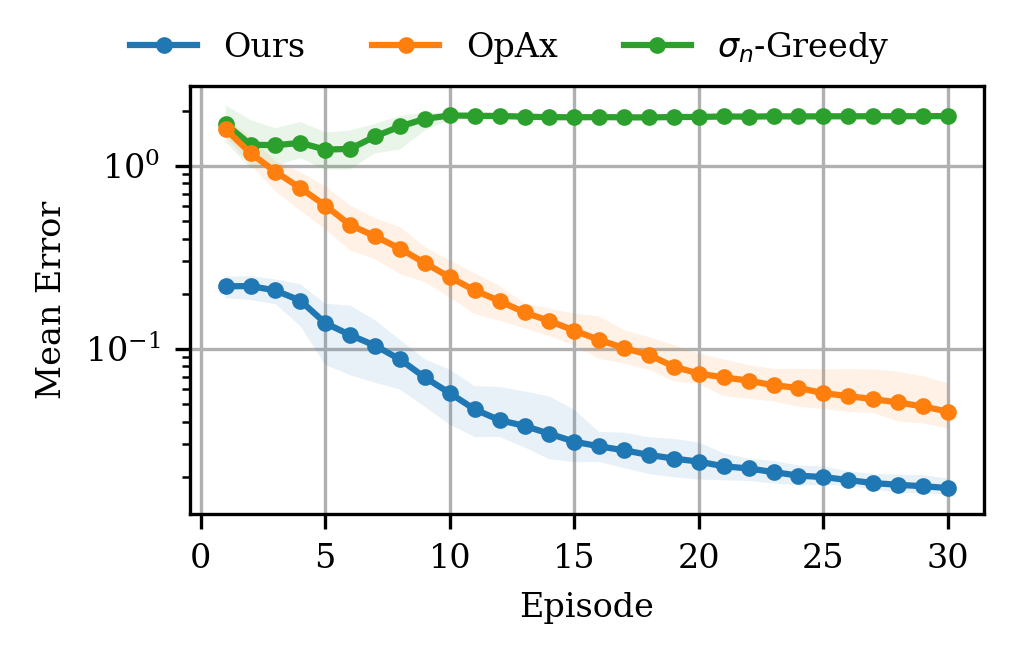}
    \caption{Mean squared error (MSE) of the learned models over test points $\mathcal{T}$. The inclusion of side information leads to reduced error. The shaded region shows the maximum and minimum values over $10$ independent trials.}
    \label{fig: mean error comparison}
\end{figure}

\begin{figure}
    \centering
    \includegraphics[keepaspectratio,width=\columnwidth]{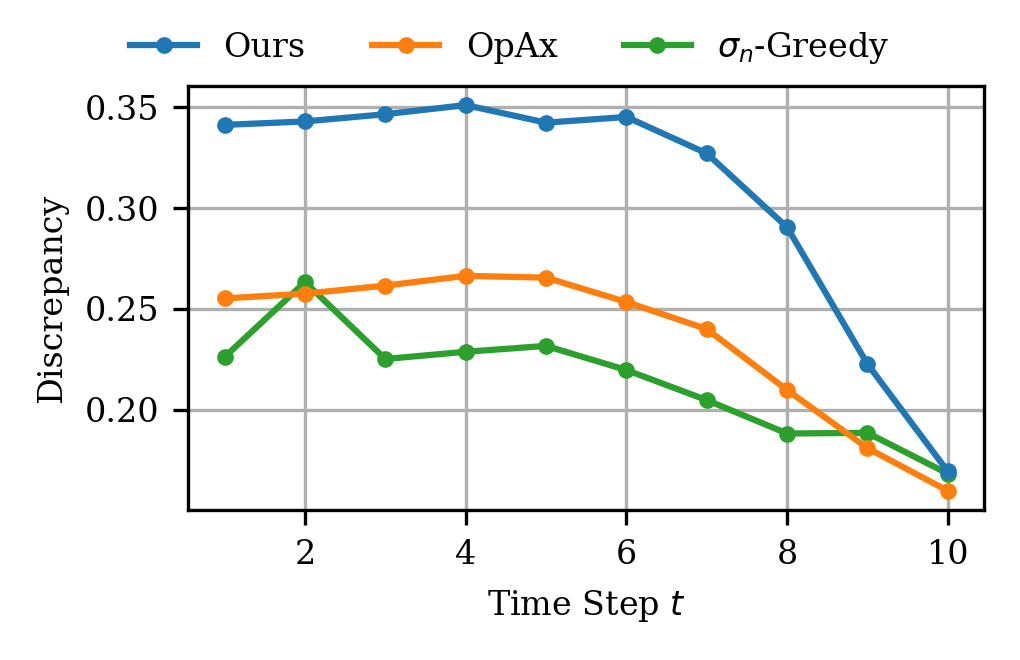}
    \caption{Average discrepancy between the prior model and the true, underlying dynamics at the points visited by the algorithms during each episode.}
    \label{fig: discrepancy}
\end{figure}

We first consider the problem of estimating the system dynamics of the controlled pendulum system as in \cite{brockman2016openai}. 
The equations of motion of the pendulum are given by $m l^{2} \ddot{\theta} + 3m g l \sin(\theta) = 3u$, where $g = 9.81$ is the acceleration due to gravity, the link mass is $m = 1$, and the link length is $l = 1$. The state of the system is given by the angle $\theta$ and angular velocity $\dot{\theta}$ of the pendulum, $x = [\theta, \dot{\theta}]^{\top} \in \mathbb{R}^{2}$, and the control input is the torque applied to the pendulum, $u \in \mathbb{R}$. For the benchmark, the angle $\theta$ is adjusted to be within the range $\theta \in [-\pi, \pi]$, the angular velocity is bounded such that $\dot{\theta} \in [-8, 8]$, and the control input is bounded to $u \in [-2, 2]$, meaning the system is under-actuated. 
We presume that the true dynamics are unknown but that we have access to an \emph{imperfect} prior model of the system with mis-specified parameters $g = 9.0$, link mass $m = 0.5$, and link length $l = 2.0$.

We implement two baselines for comparison. 
First, we consider OpAx \cite{sukhija2023optimistic}, which uses optimistic planning to identify action sequences that decrease predictive variance. At the beginning of each episode, OpAx computes an open-loop policy that maximizes the information gain during the episode using an optimistic estimate of the dynamics. It is called ``optimistic'' since it uses a planner that adds additional control variables to artificially steer plausible rollouts to states that maximize information gain \cite{NEURIPS2020_a36b598a}. 
Second, we consider a greedy variance-based exploration algorithm that myopically selects the control input at each iteration that has maximum variance. In other words, this $\sigma_{n}$-greedy policy is a purely exploration-based policy.

For all approaches, we use a GP model as in \eqref{eqn: gaussian process mean} and \eqref{eqn: gaussian process covariance} and sample over $30$ episodes with a time horizon of $\Iter=10$. For our approach and OpAx, we use the improved cross-entropy MPC planner (iCEM) \cite{pmlr-v155-pinneri21a} to compute the exploration policy. 
We use a planning horizon of $10$, with $10$ iterations, $50$ action sequence samples at each iteration, and an elite set size of $10$, holding $5$ elites between iterations. See \cite{pmlr-v155-pinneri21a} for more details.
In each episode, we choose the initial condition $x_{0}$ to be the point in $\mathcal{X}$ with the highest variance according to the current GP estimate of the dynamics \eqref{eqn : episodic gp model var}.

In order to compare our approach to OpAx and the $\sigma_{n}$-greedy policy, we compute the maximum uncertainty and mean prediction accuracy at a set of $8000$ test points $\mathcal{T} = \lbrace (x_{j}, u_{j}) \rbrace_{j=1}^{8000}$ spaced uniformly in the state space. 
Figure \ref{fig: max sigma comparison} shows the reduction in maximum variance and Figure \ref{fig: mean error comparison} 
shows the mean squared error (MSE) between the learned model $\mu_{\ep, \Iter}$ as in \eqref{eqn : episodic gp model} and the true underlying dynamics $f$. 
To demonstrate good performance, an active learning algorithm must balance exploration (reducing uncertainty or variance) while simultaneously improving accuracy. 

We see in Figure \ref{fig: max sigma comparison} that our approach performs comparably to OpAx at reducing uncertainty, though we explore the state-action space at a slightly reduced rate. Nevertheless, we see in Figure \ref{fig: mean error comparison} that our approach initially starts with a significantly reduced MSE due to the inclusion of side information via a non-zero mean prior. While it is possible to augment the GP model in OpAx to utilize such a prior, we note that the exploration procedure in OpAx does not take the prior into account during exploration.
This is key because it means that even though both methods explore at a similar rate, our approach focuses its exploration in areas with the highest mismatch (discrepancy) between the prior and the observations, providing higher intrinsic value for the learning task.
We can see this clearly in Figure \ref{fig: discrepancy}, which shows the average discrepancy between the prior model $p_0$ and the true dynamics $f$ of the visited states during each episode.
This is an important result, since it means we favor regions with high model mismatch while avoiding redundant regions where our prior model aligns closely with the true dynamics. 
As expected, we see in Figure \ref{fig: max sigma comparison} that the $\sigma_{n}$-greedy policy performs the best at reducing the uncertainty of the GP-based estimate, but note that this does not correspond to a commensurate improvement in the approximation quality. 

\subsection{Control Performance Using Learned Dynamics}

\begin{figure}
    \centering
    \includegraphics[keepaspectratio,width=0.49\columnwidth]{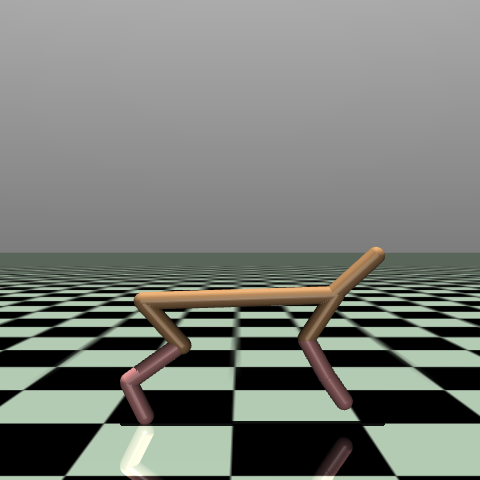} \hfill
    \includegraphics[keepaspectratio,width=0.49\columnwidth]{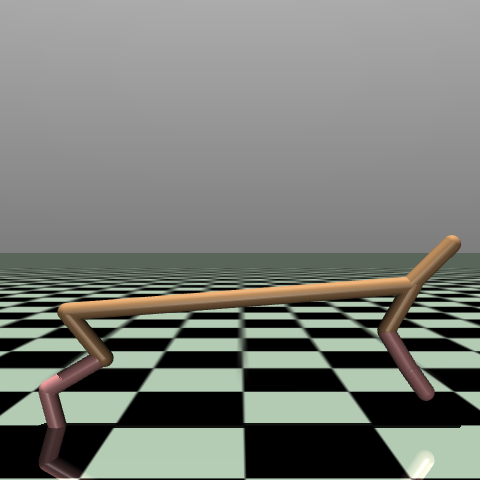}
    \caption{The true half-cheetah system (left) and the imperfect half-cheetah system used as the bias for our algorithm (right).}
    \label{fig: half-cheetah}
\end{figure}

\begin{figure}
    \centering
    \includegraphics{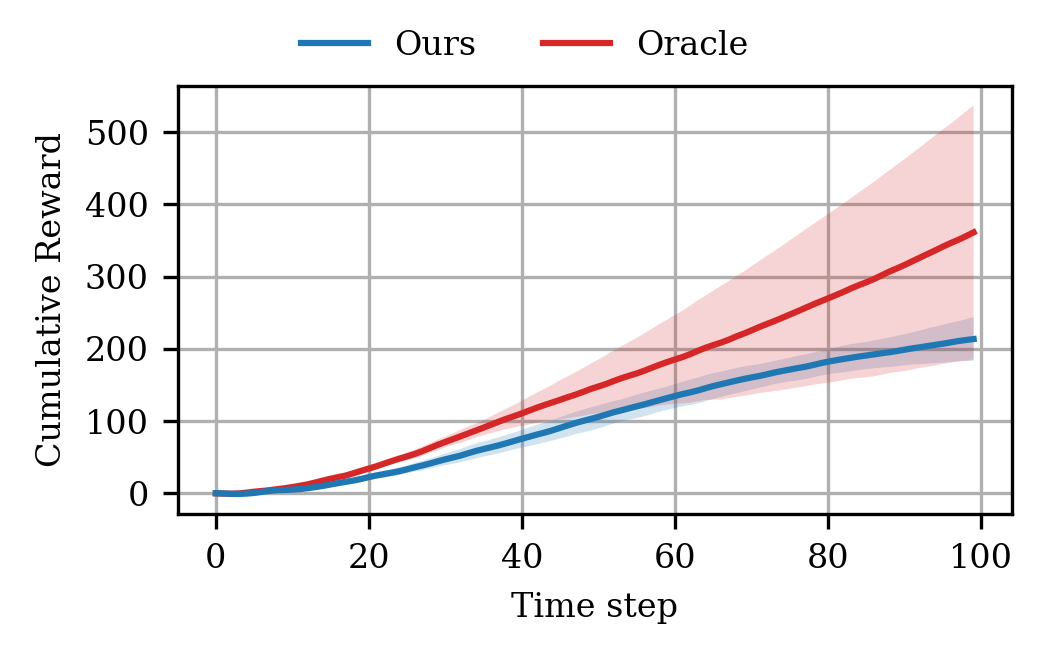}
    \caption{Cumulative reward of our approach at a downstream control task compared to the oracle using the true dynamics.}
    \label{fig: cumulative reward}
\end{figure}

We next consider the problem of computing a control policy using the model learned via our proposed active learning approach. Our goal is to measure whether the learned model is sufficiently high-fidelity for the purpose of control. 
We consider the half-cheetah environment from the MuJoCo benchmark suite \cite{6386109}. 
The half-cheetah system is specified by a collection of rigid links, joints, and actuators (see Figure \ref{fig: half-cheetah}). The state is in $\mathbb{R}^{18}$, consisting of the position and velocity of the various links, and the input space is in $\mathbb{R}^{6}$, representing the torques applied to each motor, bounded to be in $[-1, 1]$. 

We consider solving an optimal control problem as in \eqref{eqn: stochastic optimal control problem}, where the objective is to maximize the travel speed to the right (the positive $x$ direction), while minimizing control effort. We presume that the true system is unknown, but that we have access to side information in the form of an imperfect model with a torso that is $1.8$ times as long as the actual system (Figure \ref{fig: half-cheetah}, right). 

As before, we use the iCEM planner \cite{pmlr-v155-pinneri21a} to compute the exploration policy. We use a planning horizon of $10$, with $10$ iterations, $100$ action sequence samples at each iteration, and an elite set size of $10$, holding $5$ elites between iterations. 
Then, we fix the GP model of the dynamics and compute a separate policy to solve the optimal control problem as in \eqref{eqn: stochastic optimal control problem} using MPC, using the mean predictor $\mu_{\ep, \iter}$ as in \eqref{eqn : episodic gp model} for the predictive model of the dynamics in \eqref{eqn: stochastic optimal control problem dynamics}.
We similarly use iCEM to compute the control inputs, only using $200$ action sequence samples, an elite set size of $50$, and holding $15$ elites per iteration. 

Figure \ref{fig: cumulative reward} shows the average cumulative reward of the system over $10$ independent runs using the learned model as the predictive model. For comparison, we also computed the average cumulative reward using the actual system dynamics (oracle). We can see in Figure \ref{fig: cumulative reward} that our approach yields a model that demonstrates good empirical performance at a downstream control task.

\section{Conclusion}

In this paper, we present an active learning method that incorporates prior domain knowledge in the sampling procedure as well as the learned model. Under reasonable assumptions, we prove that our active sampling method provides a consistent estimator of the dynamics. Through numerical experiments, we demonstrate that our active learning approach produces an empirical dynamics estimate with lower error than methods that neglect prior knowledge, while simultaneously prioritizing exploration in regions that demonstrate a higher discrepancy between the prior model and our data-driven estimate.

% %%%%%%%%%%%%%%%%%%%%

\bibliographystyle{IEEEtran}
\bibliography{bibliography}

\end{document}